\documentclass[journal,twocolumn]{IEEEtran}
\newcommand{\bx}{{\bf x}}
\newcommand{\by}{{\bf y}}
\newcommand{\be}{{\bf e}}
\newcommand{\ba}{{\bf a}}
\newcommand{\bz}{{\bf z}}
\newcommand{\bs}{{\bf s}}
\newcommand{\bba}{{\bf A}}
\newcommand{\bbd}{{\bf D}}
\newcommand{\bbx}{{\bf X}}
\newcommand{\CC}{{\mathbb{C}}}
 \usepackage{amsmath,amssymb,amsthm,color,graphicx}
\usepackage{algorithm,algorithmic}

\newtheorem{theorem}{Theorem}

\pdfoutput=1

\begin{document}
\title{Sparse Phase Retrieval from Short-Time Fourier Measurements}

\author{\IEEEauthorblockN{Yonina~C.~Eldar\IEEEauthorrefmark{1},~\IEEEmembership{Fellow,~IEEE}, Pavel Sidorenko\IEEEauthorrefmark{2}, Dustin G.\ Mixon\IEEEauthorrefmark{3}, Shaby Barel\IEEEauthorrefmark{1} and Oren Cohen\IEEEauthorrefmark{2}}
\thanks{\IEEEauthorrefmark{1}Dept. of Electrical Engineering, Technion -- Israel Institute of Technology, Haifa, Israel. \IEEEauthorrefmark{2}Department of Physics and Solid State
Institute, Technion. \IEEEauthorrefmark{3}Department of Mathematics and Statistics, Air Force Institute of Technology, Wright-Patterson AFB, Ohio. YE was supported by the SRC and by the Intel Collaborative Research Institute for Computational Intelligence. DGM was supported by NSF Grant No.\ DMS-1321779. This work was also supported by ICore - The Israeli
Excellence Center ``circle of light''.}}

\maketitle

\begin{abstract}
We consider the classical 1D phase retrieval problem. In order to overcome the difficulties associated with phase retrieval from measurements of the Fourier magnitude, we
treat recovery from the magnitude of the short-time Fourier transform (STFT).
We first show that the redundancy offered by the STFT enables unique recovery for arbitrary nonvanishing inputs, under mild conditions. An efficient algorithm for recovery of a sparse input from the STFT magnitude is then suggested, based on an adaptation of the recently proposed GESPAR algorithm. We demonstrate through simulations that using the STFT leads to improved performance over recovery from the oversampled Fourier magnitude with the same number of measurements.
\end{abstract}

\begin{IEEEkeywords}
Phase retrieval, short-time Fourier transform, sparsity, GESPAR.
\end{IEEEkeywords}

\section{Introduction}

The problem of phase retrieval, namely recovering a signal from its Fourier transform magnitude, occurs in many fields of science and engineering, including electron microscopy, crystallography, optical imaging such as
coherent diffraction imaging (CDI), and diagnostics of ultra-short laser
pulses \cite{S14r,optics,crystallography}.
Here, we consider the 1D discrete phase retrieval problem. It is well known that there are many 1D
signals $\bx \in \CC^N$ with the same Fourier magnitude. This is true even if we eliminate trivial equivalences, such as a global phase shift, conjugate inversion and spatial shift, and even when the support
of the signal is bounded within a known range \cite{H64}.

One approach to try and overcome the non-uniqueness, is to exploit prior knowledge on
$\bx$. A popular prior that has been used extensively in signal processing is that $\bx$ is sparse, i.e. it contains only a small number $k$ of nonzero elements in an appropriate basis,
with $k \ll N$ \cite{EK12}. In this case, it has been shown
that there is a unique $\bx$ consistent with a given Fourier transform magnitude of length $M \geq 2N-1$
as long as $k \neq 6$ and the autocorrelation sequence of $\bx$
is collision free \cite{Ranieri2013,Kishore:12}. In \cite{OE13} uniqueness is established under these conditions when  $M$ is a prime number satisfying $M \geq k^2-k+1$.

Even when there is a unique solution to the phase retrieval problem, it is not clear how to find it using efficient and robust algorithms. The most popular techniques are based on alternating projections \cite{gerchberg1972practical,fienup,convex}, pioneered by Gerchberg and Saxton \cite{gerchberg1972practical}
and extended by Fienup \cite{fienup}. These methods generally require precise prior information on the signal such as knowledge
of the support set and often converge to erroneous results.
More recently, phase retrieval has been treated using semidefinite programming (SDP) and low-rank matrix recovery ideas \cite{CESV12,SESE11}.
 Both the SDP \cite{SESE11,JOH12,OYDS12,Mallat} and Fienup recovery methods \cite{M07,sparseFienup} have been extended to phase retrieval of sparse inputs.
A greedy approach to phase retrieval which can be applied in both the sparse and non-sparse setting was developed in \cite{SBE14} under the name of GESPAR: GrEedy Sparse PhAse Retrieval.
This method is motivated by the local search-type techniques of \cite{BE12}, and was shown to lead to improved performance over both SDP and Fienup-based algorithms, with low computational complexity.

To allow for recovery of a broader set of inputs, here we consider recovery from the magnitude of the short-time Fourier transform (STFT) \cite{NQL83}.
Phase retrieval from the STFT magnitude
has been used in several signal processing applications, for example in speech and audio processing \cite{NQL83,GL84}.
It has also been applied extensively in optics. One example is in
frequency resolved optical gating (FROG) or XFROG which are used for
characterizing ultra-short laser pulses by optically producing the STFT magnitude of the measured pulse \cite{T00,K08}.
Another example is ptychographical CDI \cite{GF08}.
Both Fienup-type methods \cite{GL84,K08} and SDP approaches have been extended to recovery from the STFT magnitude \cite{SSJ12}.

Here, we first show that the redundancy offered by the STFT enables unique recovery for arbitrary inputs $\bx$ that are nonvanishing, under mild conditions. We then suggest an efficient algorithm for recovery of a sparse input from the STFT magnitude, based on an adaptation of GESPAR to STFT measurements. We compare this approach to recovery from the Fourier magnitude with oversampling such that the number of measurements is the same in both settings. Our simulations demonstrate that applying GESPAR allows one to recover sparse signals from the STFT magnitude even in cases where recovery from the Fourier magnitude fails.

\section{Phase Retrieval from STFT Measurements}

\subsection{The STFT}

Consider a length-$N$ signal $x[n]$ defined on $0 \leq n \leq N-1$.
The short-time Fourier transform (STFT) of $x[n]$ is defined as
\begin{equation}
\label{eq:stft}
X_g(m,k)=\sum_{n=0}^{N-1} x[n]g[mL-n]e^{-j2\pi k n/N},
\end{equation}
where $L$ is a given parameter and
 $g[n]$ is an appropriate window. The parameter $L$ denotes the
separation in time between adjacent sections.
We assume throughout that $g[n]$ is  periodically extended over the boundaries in \eqref{eq:stft}.

To invert the STFT operation we consider the inverse discrete Fourier transform (DFT) of $X_g(m,k)$:
\begin{equation}
\label{eq:istft}
x_g(m,n)=\frac{1}{N}\sum_{k=0}^{N-1} X_g(m,k) e^{j2\pi k n/N}.
\end{equation}
We can then obtain $x[n]$ by
\begin{equation}
\label{eq:istftx}
x[n]=\frac{\sum_{m} x_g(m,n)\overline{g[mL-n]}}
{\sum_{m } |g[mL-n]|^2}.
\end{equation}

\subsection{Phase Retrieval}

Suppose now that we are given only the magnitude $|X_g(m,k)|$ of the STFT. The question we would like to address is whether we can recover $x[n]$.
The following theorem provides a first step towards phase retrieval from the STFT.
We define the length $W$ of the window $g$ to be the size of the smallest interval $[a,b]\subseteq\{0,\ldots,N-1\}$ such that the support of $g$ is contained in $[a,b]$; since $g$ is $N$-periodic, the endpoints $a$ and $b$ are to be interpreted modulo $N$.

The theorem below shares some similarities with a related result from \cite{NQL83}. The main difference between the two settings is that in our definition of the STFT we consider a periodically extended window, and a sampled Fourier transform.
\begin{theorem}
Let $g[n]$ be a window of length $W\geq 2$ and consider the STFT defined by \eqref{eq:stft} with $L=1$.
Then $|X_g(m,k)|^2$ uniquely determines every $x[n]$ with nonvanishing entries (up to a global phase factor) provided
\begin{itemize}
\item[(i)] the length-$N$ DFT of $v[n]=|g[n]|^2$ is nonvanishing,
\item[(ii)] $N\geq 2W-1$, and
\item[(iii)] $N$ and $W-1$ are coprime.
\end{itemize}
In the special case in which $g[n]$ is the $N$-periodic extension of $1_{[0,W-1]}$, condition (i) is satisfied when $N$ and $W$ are coprime.
\end{theorem}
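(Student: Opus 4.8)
The plan is to convert the magnitude data into a family of weighted autocorrelations of $x$, recover the entrywise magnitudes first, and then pin down the relative phases by exploiting the overlap structure of the window. Fixing $m$ and writing $y_m[n]=x[n]g[m-n]$ (using $L=1$), the measurement $|X_g(m,k)|^2$ is exactly $|\hat y_m(k)|^2$, the squared DFT magnitude of the windowed segment $y_m$. Taking the inverse DFT in $k$ therefore returns the circular autocorrelation of $y_m$, which I would record as
\begin{equation*}
a_\ell(m)=\sum_{n} x[n]\,\overline{x[n-\ell]}\;g[m-n]\,\overline{g[m-n+\ell]}.
\end{equation*}
All the $a_\ell(m)$ are thus computable from the data, and the whole argument reduces to extracting $|x[n]|$ and the phase-difference products $x[n]\overline{x[n-\ell]}$ from them for enough lags $\ell$.

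First I would recover the magnitudes. The zero-lag slice is $a_0(m)=\sum_n |x[n]|^2\,v[m-n]$ with $v[n]=|g[n]|^2$, i.e.\ the circular convolution of $|x|^2$ with $v$. By hypothesis (i) the length-$N$ DFT of $v$ is nonvanishing, so I can divide in the Fourier domain and invert to obtain $|x[n]|^2$, hence every $|x[n]|$.

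The heart of the proof is the phase recovery, and the key observation concerns the largest useful lag, $\ell=W-1$. Writing $a_\ell(m)=\sum_n x[n]\overline{x[n-\ell]}\,w_\ell[m-n]$ with $w_\ell[p]=g[p]\overline{g[p+\ell]}$, each slice is a circular convolution of the phase-difference sequence $b_\ell[n]=x[n]\overline{x[n-\ell]}$ with the weight $w_\ell$. For a generic lag the DFT of $w_\ell$ may vanish, blocking deconvolution; but at $\ell=W-1$ the two shifted copies of the length-$W$ support of $g$ can overlap in at most a single index. This is exactly where (ii) enters: on the cycle $\mathbb{Z}/N$, the support arc $[a,a+W-1]$ of $g$ and the arc $[a-(W-1),a]$ on which $g[\,\cdot+W-1]$ is supported are two length-$W$ arcs meeting at $a$, and $N\geq 2W-1$ is precisely the condition preventing them from wrapping around and overlapping elsewhere. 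Hence $w_{W-1}$ is a single nonzero tap at $p=a$, the convolution collapses, and I read off $b_{W-1}[n]=x[n]\overline{x[n-(W-1)]}$ directly for all $n$. Since $x$ is nonvanishing, dividing by the already-known magnitudes yields the phase differences $\phi_n-\phi_{n-(W-1)}$ for every $n$.

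Finally I would chain these differences together. Iterating the shift $n\mapsto n-(W-1)$ on $\mathbb{Z}/N$ generates a single $N$-cycle exactly when $\gcd(N,W-1)=1$, which is hypothesis (iii); so fixing $\phi_0$ (the unavoidable global phase) and propagating the known differences determines every $\phi_n$, and with it $x$, up to that global phase. The special-case remark follows since for $g=1_{[0,W-1]}$ one has $v=1_{[0,W-1]}$, whose DFT is a Dirichlet kernel nonvanishing iff $\gcd(N,W)=1$. I expect the only delicate step to be the cyclic overlap count showing that $w_{W-1}$ has a single support point; the magnitude recovery and the coprimality chaining are then routine, and the main thing to watch throughout is consistent bookkeeping of the periodic (modulo $N$) indexing.
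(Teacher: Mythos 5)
Your proposal is correct and follows essentially the same route as the paper's proof: recover $|x[n]|^2$ by deconvolving the zero-lag autocorrelation against $v=|g|^2$ (the paper phrases this as inverting a circulant matrix, which is the same use of condition (i)), then observe that at lag $W-1$ condition (ii) forces the window-product weight to have a single nonzero tap at $p=a$, yielding the stride-$(W-1)$ phase differences, which condition (iii) lets you chain into a single cycle. The special-case Dirichlet-kernel remark also matches the paper.
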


\begin{proof}
From Plancherel's theorem we have that
\begin{align}
\sum_{k=0}^{N-1}|X_g(m,k)|^2
&=N\sum_{n=0}^{N-1}|x[n]g[m-n]|^2\\
&=N\sum_{n=m-W+1}^m|x[n]|^2|g[m-n]|^2,
\end{align}
where the indices in the last sum are to be interpreted modulo $N$. Let $\by$ be the length-$N$ vector with $m$th element
$y_m=\sum_{k=0}^{N-1}|X_g(m,k)|^2$, and let $\bz$ be the length-$N$ vector with $m$th element $|x[m]|^2$.
Let $[a,a+W-1]$ denote the interval of size $W$ that contains the support of $g$.
It then follows that $\by=N \bba\bz$ where $\bba$ is an $N \times N$ circulant matrix whose $(a+W-1)$st row is given by
$\ba=[|g[a+W-1]|^2,\ldots,|g[a]|^2,0,\ldots,0]^T$.
Since circulant matrices are diagonalized by the DFT, $\bba$ is invertible if and only if the DFT of $\ba$ is nowhere vanishing, which is equivalent to (i).

When $g[n]$ is a square window, $\ba$ is given by $1_{[0,W-1]}$. The length-$N$ DFT will then vanish only if $\sin(\pi kW/N)$ is zero for some $k=1,\ldots,N-1$, which will not happen when
$N$ and $W$ are coprime.

At this point, we have determined $\bz$ from $\by$ so that we know $|x[n]|$ for all $n$.
It remains to find the relative phases between the entries of $x[n]$.
To this end, we will use the fact that if $y[n]$ is a length-$N$ sequence with DFT $Y[k]$, then $D[k]=|Y[k]|^2$ is the DFT of the circular autocorrelation:
\begin{equation}
\label{eq:corr}
r[n]=\sum_{p=0}^{N-1}y[p]\overline{y[p-n]},
\end{equation}
where $p-n$ should be interpreted modulo $N$.
In our case, for every $m$ we can define $Y[k]=X_g(m,k)$, and
$y[n]=x[n]g[m-n]$ to get
\begin{align}
r[n]
&=\frac{1}{N}\sum_{k=0}^{N-1}|X_g(m,k)|^2e^{j2\pi kn/N}\\
&=\sum_{p=0}^{N-1}x[p]g[m-p]\overline{x[p-n]g[m-p+n]}.
\end{align}

Consider $r[W-1]$.
Again, let $[a,a+W-1]$ denote the interval of size $W$ that contains the support of $g$.
Then by condition (ii), $g[m-p]\overline{g[m-p+W-1]}$ is nonzero if and only if $m-p=a$, meaning
\begin{equation}
r[W-1]
=x[m-a]g[a]\overline{x[m-a-W+1]g[a+W-1]}.
\end{equation}
We next divide by the known values $|x[m-a]|$, $|x[m-a-W+1]|$, $g[a]$ and $\overline{g[a+W-1]}$ to isolate the phase $\operatorname{arg}(x[m-a])-\operatorname{arg}(x[m-a-W+1])$; this division is possible since $x[n]$ is nowhere vanishing and $g[a]$ and $g[a+W-1]$ are nonzero by the definitions of $a$ and $W$.
Finally, assuming (iii), we can arbitrarily set $x[0]$ to be positive and propagate these relative phases to determine $x[n]$ up to a global phase.
\end{proof}

To evaluate our theorem, we note its shortcomings.
While it ensures that almost every signal $x[n]$ is uniquely determined by the modulus of its STFT (with general conditions on $g[n]$), the proof makes use of the fact that $L=1$.
If $L>1$, then the linear system $\by=N \bba\bz$ is underdetermined.
To be fair, this just establishes that our proof technique is insufficient in the $L>1$ regime; since our proof only makes use of $r[0]$ and $r[W-1]$ for $r[n]$ defined by \eqref{eq:corr}, the STFT has more redundancy to leverage, so that one should expect phase retrieval to be robust to downsampling the STFT.
We confirm this intuition with simulations which clearly demonstrate the feasibility of phase retrieval when $L>1$. We also show in simulations that the length of the Fourier transform in \eqref{eq:stft} can be chosen equal to the window length.

Next, while our theorem provides a guarantee for nonvanishing signals, it says nothing about sparse signals.
This is another artifact of our proof technique.
Based on intuitions from compressed sensing, one might expect to recover sparse signals, perhaps with even larger downsampling rates $L$.
However, as the following theorem establishes, there is a fundamental limit to how sparse a signal can be in the identity basis:
\begin{theorem}
Consider nonoverlapping intervals $[a_1,b_1],[a_2,b_2]\subseteq\{0,\ldots,N-1\}$ such that $a_2-b_1$ and $a_1-b_2$ (interpreted modulo $N$) are both $\geq W$, and take $x[n]$ supported on $[a_1,b_1]$ and $y[n]$ supported on $[a_2,b_2]$.
Then the modulus squared of the STFT of $x[n]+y[n]$ and of $x[n]-y[n]$ are equal for any $L$.
\end{theorem}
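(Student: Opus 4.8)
The plan is to combine linearity of the STFT with the geometric separation of the two supports. Writing $X_g$ and $Y_g$ for the STFTs of $x[n]$ and $y[n]$ taken individually, linearity of \eqref{eq:stft} gives that the STFT of $x[n]\pm y[n]$ is exactly $X_g(m,k)\pm Y_g(m,k)$. Expanding the modulus squared,
\begin{equation}
|X_g(m,k)\pm Y_g(m,k)|^2=|X_g(m,k)|^2+|Y_g(m,k)|^2\pm 2\operatorname{Re}\!\big(X_g(m,k)\overline{Y_g(m,k)}\big),
\end{equation}
so the two sign choices yield identical magnitudes for every $(m,k)$ precisely when the cross term $X_g(m,k)\overline{Y_g(m,k)}$ vanishes everywhere. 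The whole argument thus reduces to proving that this product is identically zero, independently of $L$.

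To get that, I would prove the stronger statement that for each fixed window position $m$, at most one of the sequences $X_g(m,\cdot)$ and $Y_g(m,\cdot)$ is not identically zero. Fix $m$ and view the windowing weight $g[mL-n]$ as a function of $n$; since $g$ has length $W$, this function is supported on a set $I_m$ of $W$ consecutive indices (modulo $N$). By construction $X_g(m,k)$ can be nonzero only if $I_m$ meets the support interval $[a_1,b_1]$, and $Y_g(m,k)$ can be nonzero only if $I_m$ meets $[a_2,b_2]$. Hence it suffices to show that no length-$W$ interval $I_m$ can intersect both $[a_1,b_1]$ and $[a_2,b_2]$ at once; then for each $m$ the product $X_g(m,k)\overline{Y_g(m,k)}$ is zero for all $k$, the cross term disappears, and the two STFT magnitudes coincide.

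The remaining point, and the one delicate step, is this modular interval count. Suppose $I_m$ contained a point $q_1\in[a_1,b_1]$ and a point $q_2\in[a_2,b_2]$. Being a contiguous arc, $I_m$ would then contain one of the two arcs joining $q_1$ and $q_2$ around the cycle $\{0,\dots,N-1\}$. The forward arc from $q_1$ to $q_2$ has length at least $a_2-b_1+1\geq W+1$, while the complementary arc has length at least $a_1-b_2+1\geq W+1$, where I am using hypotheses $a_2-b_1\geq W$ and $a_1-b_2\geq W$ (interpreted modulo $N$). Either way $I_m$ would be forced to contain more than $W$ indices, contradicting $|I_m|=W$. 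I expect the care here to lie entirely in treating both orientations around the circle correctly and in confirming that an interval of length exactly $W$ genuinely cannot bridge a gap of size $\geq W$; once this separation is established the conclusion is immediate and holds for arbitrary $L$, since the disjointness of the windowed supports never referenced the hop size.
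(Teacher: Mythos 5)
Your proof is correct and follows essentially the same route as the paper's: both arguments reduce to the observation that each windowed segment can overlap at most one of the two support intervals, so for every $(m,k)$ the STFT of $x\pm y$ is either $X_g(m,k)$, $\pm Y_g(m,k)$, or $0$, and the magnitudes agree. Your version merely phrases this via the vanishing cross term and spells out the modular interval-counting step that the paper leaves implicit in its partition of $m$ into the sets $M_0,M_1,M_2$.
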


\begin{proof}
It suffices to prove the result for $L=1$.
Take $u[n]=x[n]+y[n]$ and $v[n]=x[n]-y[n]$.
For each $m$, put $m\in M_1$ if the support of $g[m-n]$ intersects $[a_1,b_1]$, $m\in M_2$ if it intersects $[a_2,b_2]$, and $m\in M_0$ if it intersects neither.
Depending on $\sigma=\pm1$, the following is an expression for either $U_g(m,k)$ or $V_g(m,k)$:
\begin{align}
&\sum_{n=0}^{N-1}(x[n]+\sigma y[n])g[m-n]e^{-j2\pi kn/N}\nonumber \\
&=\left\{
\begin{array}{cl}
X_g(m,k),&\mbox{if }m\in M_1\\
\sigma Y_g(m,k),&\mbox{if }m\in M_2\\
0,&\mbox{if }m\in M_0.
\end{array}
\right.
\end{align}
As such, $|U_g(m,k)|=|V_g(m,k)|$ for every $m$ and $k$.
\end{proof}

To provide some intuition to the difficulty encountered when the sparsity is in the identity basis, we use the
fact that time shifts preserve the Fourier magnitude to
construct a class of examples of sparse signals that cannot be recovered uniquely from the STFT magnitude. Consider, for example, a window $g[n]=1$ over its support, a signal $x[n]$ and
parameters $N,W,L$ such that $W$ and $N$ are both multiples of $L$. The signal $x[n]$ has a nonzero segment of length $L-r$ for some $1 \leq r \leq L-1$,
which is positioned within an interval of the form $[(a-1)L+1,aL]$ for some integer $a$.
This interval has at least $W-L$ zeros attached to it both on the left and on the right. An example is depicted in
 Fig.~\ref{fig:sparse}(a). Each such nonzero segment can be moved up to $r$ indices within the interval and can be multiplied by an arbitrary phase without affecting the STFT magnitude, as illustrated in the figure. In this example $N=64,W=16,L=4$ and $r=1$. The figure shows two signals that differ in their first nonzero segment. The segments are a simple shift by one tap and negative of each other. The resulting signals have the same STFT magnitude. The windows overlapping this segment are also illustrated --- for both signals, the Fourier transform magnitude of the windowed signal is the same in all cases. In contrast,  Fig.~\ref{fig:sparse}(b) shows a random sparse signal with $50$ nonzero values in the identity basis that is perfectly recovered using GESPAR.
\begin{figure} [t!]
{\includegraphics[width=\columnwidth]{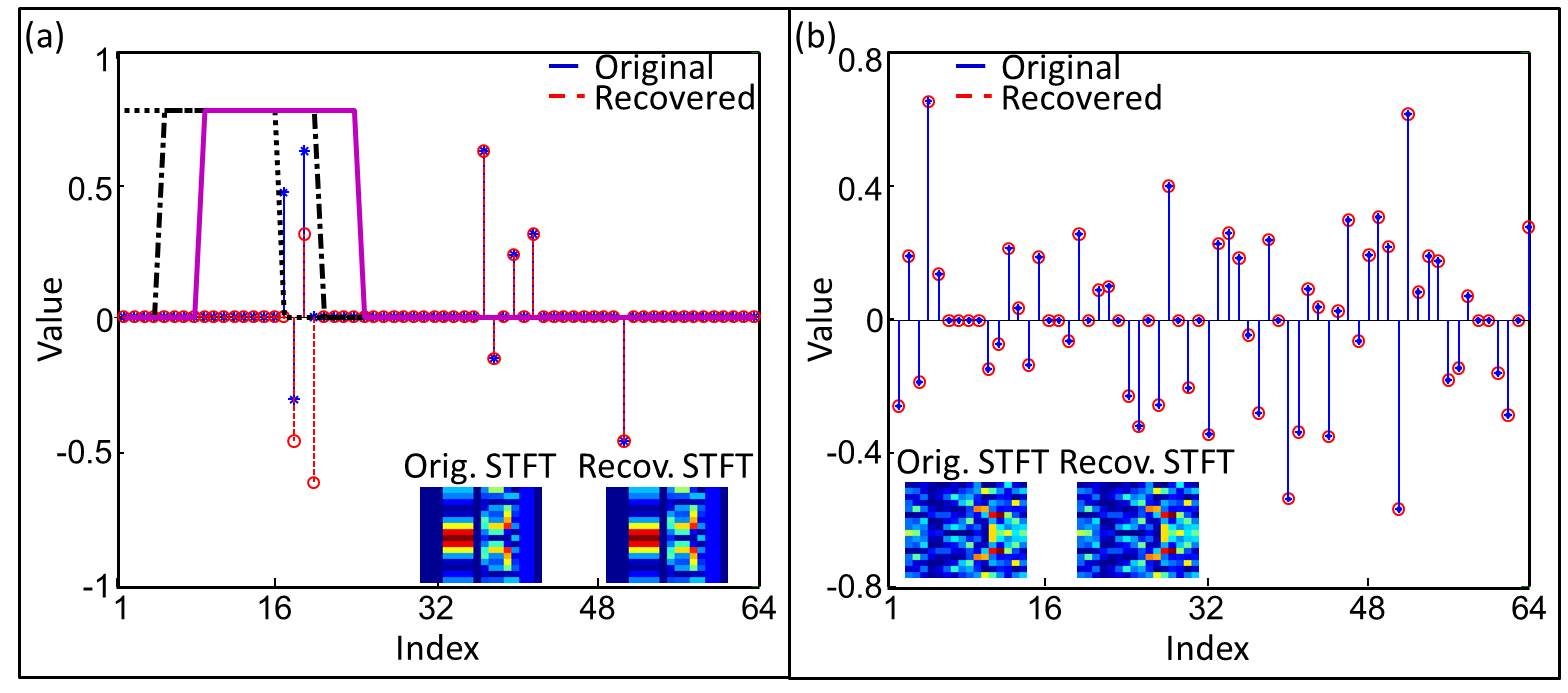}}
\hfil
\caption{(a) Ambiguities with sparse inputs (b) recovery of a sparse signal.}
\label{fig:sparse}
\end{figure}

\subsection{Sparse Phase Retrieval}

In practice, the performance of STFT phase retrieval can be improved by exploiting sparsity.
Specifically, suppose that $x[n]$ is sparse in a basis (or frame) represented by a matrix $\bbd$. This means that $\bx=\bbd\bs$ where $\bx$ is the length-$N$ vector with elements $x[n]$ and $\bs$ is a $k$-sparse vector such that $\|\bs\|_0=k \ll N$, where $\|\bs\|_0$ denotes the number of nonzero elements in $\bs$.
Our goal is to recover $\bx$, or equivalently $\bs$, given the STFT magnitude of $\bx$ by exploiting the sparsity of $\bs$.

To this end, we apply the GESPAR algorithm developed in \cite{SBE14} for phase retrieval of sparse vectors from general quadratic measurements.
In our case, $y[n]=|X_g(m,k)|^2$ can be written as a quadratic function of $\bs$ of the form:
\begin{equation}
\label{eq:qstft}
y[n]=|\be_n^* \textbf{F} \textbf{G}_m \textbf{D}\textbf{s}|^2,
\end{equation}
where $ \textbf{G}_m $ is an $N \times N$ diagonal matrix with diagonal elements $ g[mL-n] $ properly positioned along the diagonal,
$\textbf{F}$ is an $N \times N$ DFT matrix, and $\be_n$ is the $n$th column of the identity.

\section{Recovery Algorithms}

As noted in the introduction, alternating projection algorithms are the most popular techniques for phase retrieval. These methods have also been applied to recovery from the STFT magnitude leading to the well known Griffin-Lim algorithm (GLA) \cite{GL84}, summarized as Algorithm~\ref{alg:GLA}.
\begin{algorithm}[!t]
\caption{Griffin-Lim algorithm (GLA) \label{alg:GLA}}
\begin{algorithmic}
\STATE Input: Measurements $y(m,k)=|X_g(m,k)|^2$ and window $g[n]$, $k,n=0,\ldots,N-1, m=0,\ldots,M-1$
\STATE Output: Estimate $\hat{x}[n]$ of $x[n]$
\STATE Initialize: Choose a random input signal $x_0[n]$, $\ell = 0$
\WHILE{halting criterion false}
\STATE $\ell \leftarrow \ell+1$
\STATE $X_g^\ell(m,k)=\sum_{n=0}^{N-1} x_{\ell-1}[n]g[mL-n]e^{-j2\pi k n/N}$
\COMMENT{compute the STFT of $x_{\ell-1}[n]$}
\STATE $b(m,k)=\frac{X_g^\ell(m,k)}{|X_g^\ell(m,k)|}\sqrt{y(m,k)}$
\COMMENT{Keep phase and update magnitude}
\STATE Compute the inverse DFT $x_g^\ell(m,n)$ of $b(m,k)$ using \eqref{eq:istft}
\STATE $\hat{x}_\ell[n]=\frac{\sum_{m} x_g^\ell(m,n)\overline{g[mL-n]}}
{\sum_{m } |g[mL-n]|^2}$
\COMMENT{update signal estimate}
\ENDWHILE
\STATE return $\hat{x}[n] \leftarrow \hat{x}_\ell[n]$
\end{algorithmic}
\end{algorithm}
Note, that although we assumed that the DFT length in \eqref{eq:stft} is equal to the signal length $N$, in  practice, we can choose the DFT length, which we denote by $K$, to be any integer that is equal or larger than the window length $W$.

In optics, a slight variation of GLA is used, referred to as principal components generalized projections (PCGP) \cite{K08}.
It differs from Algorithm~\ref{alg:GLA} in the last step: Rather than computing the inverse DFT and updating the signal according to
\eqref{eq:istftx}, it updates the signal by performing a rank-one approximation to an appropriately formed matrix. Specifically, we first form  a matrix $\bbx^\ell$ whose $m$th row is
a circulant shift of $X_g^\ell(m,k)$ by $m$ indices. Next, a singular value decomposition (SVD) is performed on $\bbx^\ell$. The signal $\hat{x}_\ell[n]$ is then chosen as the left singular vector corresponding to the largest singular value.

In order to improve the recovery performance, we suggest exploiting sparsity in the signal input (when present). To this end, we apply the GESPAR algorithm to the quadratic measurements \eqref{eq:qstft}. More specifically, GESPAR is aimed at approximating the solution to the problem
\begin{equation}
\min_{\bs} \sum_n (y[n]-|\be_n^* \textbf{F} \textbf{G}_m \textbf{D}\textbf{s}|^2)^2
\quad \mbox{s.t. } \|\bs\|_0 \leq k.
\end{equation}
GESPAR is a 2-opt local search algorithm in which the signal support is iteratively updated
using the gradient of the objective function. A damped Gauss Newton method is used to minimize the objective over the support.

Below, we compare between two different applications of GESPAR using the same number of measurements. The first, STFT GESPAR, applies GESPAR to the squared-magnitude of the STFT. The second, referred to as PS GESPAR, uses GESPAR to recover $x[n]$ from its power spectrum, namely, the squared-magnitude of the Fourier transform.  We use an oversampled Fourier transform such that the number of measurements in PS GESPAR is the same as that in STFT GESPAR, which is equal to $P=MK$. Here $M$ is the number of windows and $K$ is the DFT length of each windowed signal.

\section{Simulations}
We now demonstrate the performance of STFT GESPAR via simulations. The sparsity dictionary is chosen as a random basis with iid standard normal variables, followed by normalization of the columns.
To generate the sparse inputs, for every sparsity level $k$ we choose $k$ locations for the nonzero components uniformly at random. The signal values over the selected support are then drawn from an iid standard normal distribution.  We compute the STFT using a square window of length $W=16$, so that $g[n]=1$ over its support.

In the first simulation, we examine the performance of GESPAR, GLA, and PCGP in recovering an unknown vector $\textbf{s}$ of length $N=64$ in \eqref{eq:qstft}, as a function of $L$ for $L=2,4,8,16$. The number of DFT points is chosen as $K=W=16$. The number of measurements in the STFT is then $P=512,256,128,64$. STFT GESPAR is used with a threshold $\tau=10^{-4}$ and maximum number of swaps $50000$; PCGP and GLA are run using 50 random initial points with 1000 maximal iterations.
For comparison with the Fourier transform approach, we show the performance of PS GESPAR with the same parameters
and oversampling factors of $8,4,2,1$.

In Fig.~\ref{fig:stftl} we plot the recovery probability as a function of sparsity by recording the percentage of successful recoveries in $100$ simulations, where success is declared if the normalized squared error is smaller than $10^{-2}$. Note, that when $L=16$ there is no redundancy in the STFT and therefore it is not surprising that there is no advantage to the STFT method. In all other cases for which $L<16$, the STFT introduces redundancy, which leads to improved performance over simply oversampling the DFT. It is also evident that GESPAR outperforms both GLA and PCGP.
\begin{figure} [h]
{\includegraphics[width=\columnwidth]{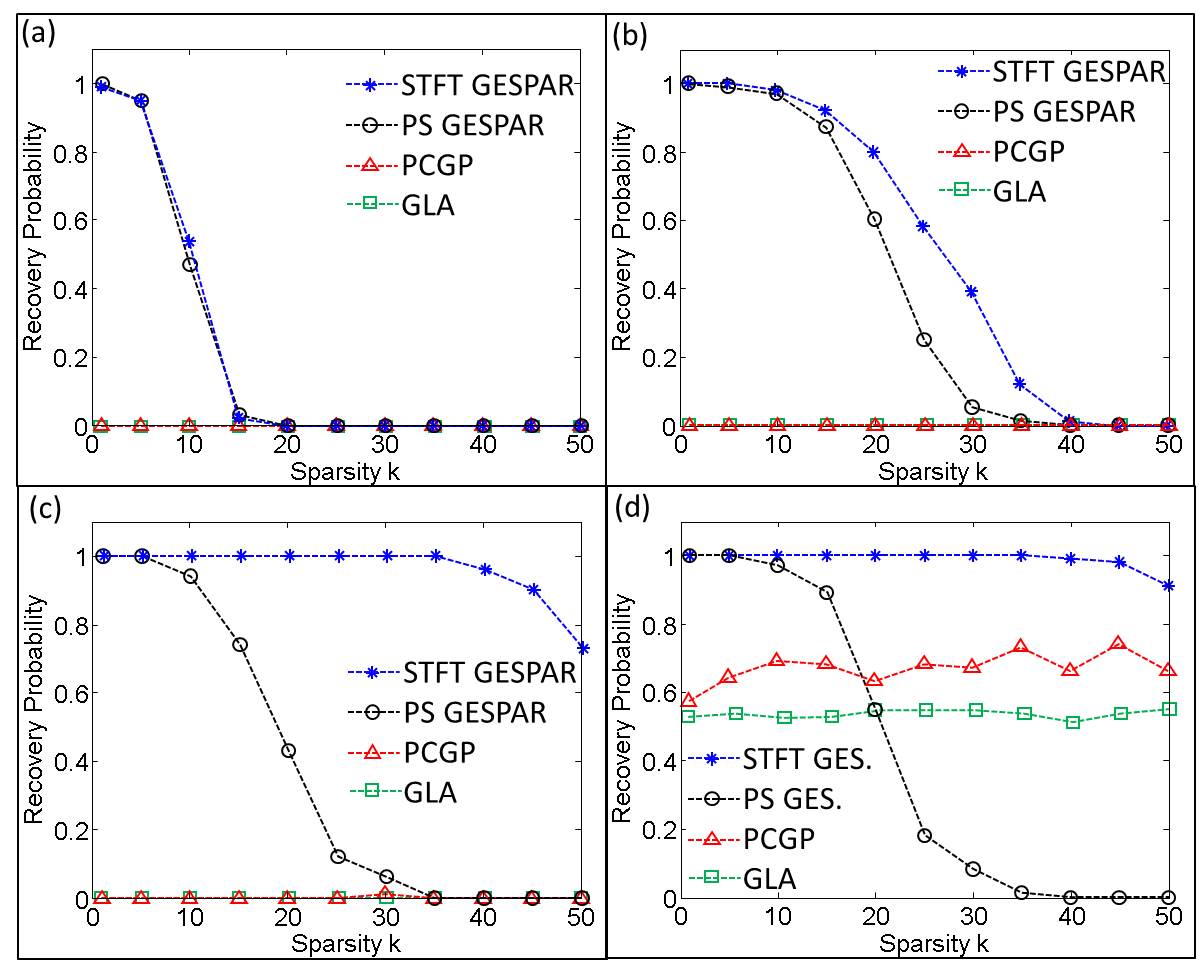}}
\hfil
\caption{Recovery probability vs sparsity $k$ for varying number of measurements (a) 64, (b) 128, (c) 256, (d) 512.}
\label{fig:stftl}
\end{figure}

In Fig.~\ref{fig:stftk} we consider the effect of noise and the DFT length on the normalized mean-squared error (NMSE). All parameters are the same as in Fig.~\ref{fig:stftl} besides $L$ which is set to $L=1$ and the signal length which is $N=32$. The DFT length is chosen as
  $K=2,4,8,16,32$. When $K$ is smaller than the window length $W=16$, we use a DFT of length $W$ and choose only the first $K$ measurements; i.e. we use only the $K$ low frequency measurements.
As expected, increasing the DFT length improves the recovery ability. It is also evident that the performance improves significantly when all Fourier components are measured, namely, when $K \geq 16$.
\begin{figure} [h]
\centering
{\includegraphics[width=\columnwidth]{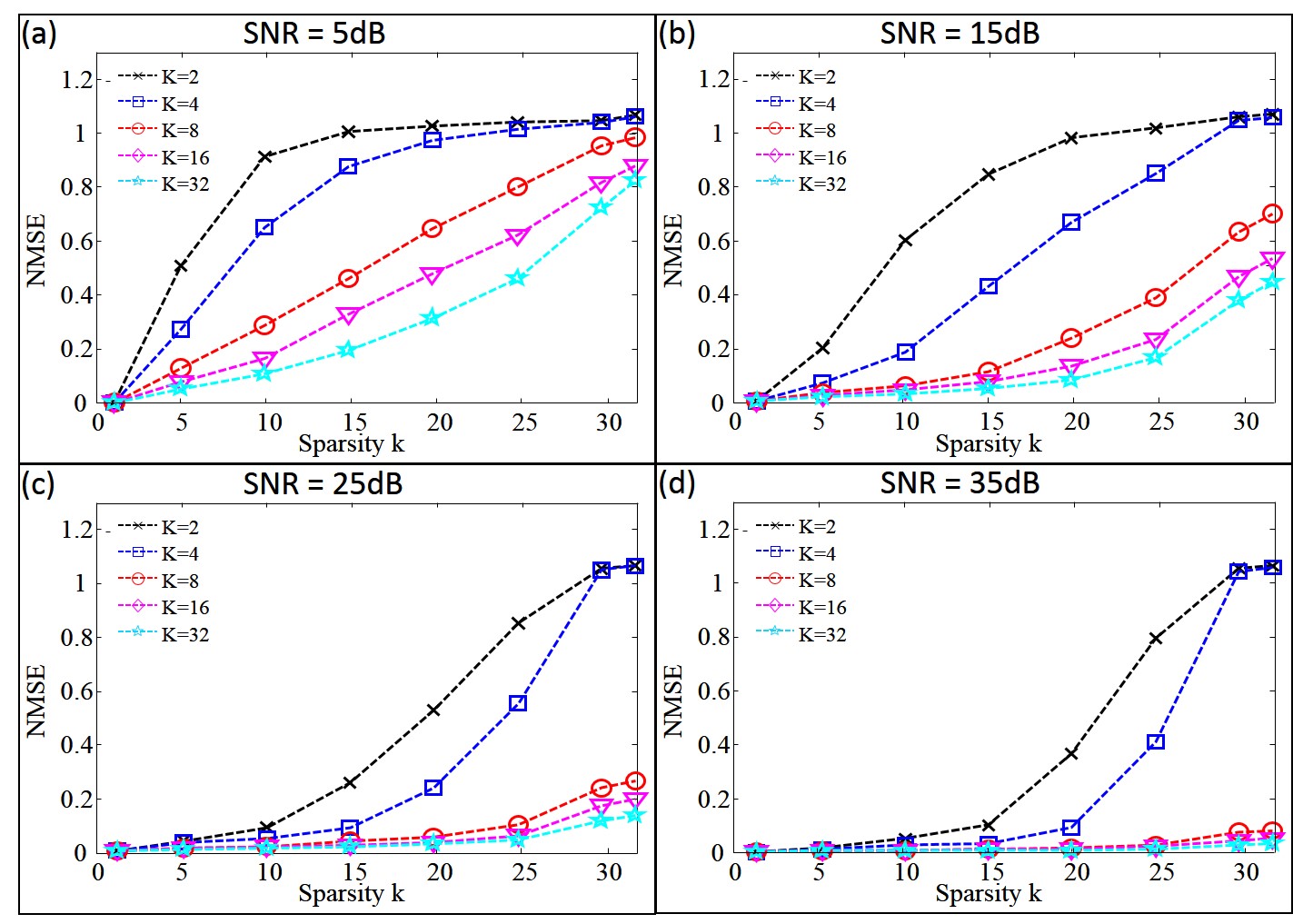}}
\hfil
\caption{Normalized MSE vs sparsity $k$ for varying number of frequencies $K= 2,4,8,16,32$ using STFT GESPAR and several SNR values (in [dB]) (a) 5, (b) 15, (c) 25, (d) 35.}
\label{fig:stftk}
\end{figure}

\section{Conclusion}

In this letter we suggest to improve the performance of phase retrieval methods by exploiting sparsity together with the STFT. We demonstrated that for the same number of measurements, using the STFT can lead to far better performance than using an oversampled DFT.
We also showed that GESPAR is able to exploit both the redundancy in the measurements and the sparsity of the input, leading to high probability of recovery as long as sufficient redundancy is introduced into the measurement process. There are many interesting theoretical directions remained to be explored in this context. In particular, it is important to understand under what conditions on the sparsity level and sparsity basis one can recover sparse inputs from the STFT, and how large $L$ can be made while still ensuring recovery for generic inputs.

\bibliographystyle{IEEEtrans}
\bibliography{STFT}

\end{document}